\documentclass[11pt]{article}
\usepackage{textcomp}
\usepackage[a4paper, margin=1.25in]{geometry}
\usepackage[T1]{fontenc} 
\usepackage[normalem]{ulem} 

\usepackage[dvipsnames,svgnames,x11names]{xcolor}

\colorlet{RED}{red} 

\colorlet{WHITE}{white}
\colorlet{FORESTGREEN}{ForestGreen}
\colorlet{MAROON}{Maroon}

\usepackage[nocompress]{cite}

\usepackage{balance}

\usepackage{listings}

\lstdefinelanguage{XML}
{
basicstyle=\ttfamily\footnotesize,
  morestring=[b]",
  moredelim=[s][\bfseries\color{Maroon}]{<}{\ },
  moredelim=[s][\bfseries\color{Maroon}]{</}{>},
  moredelim=[l][\bfseries\color{Maroon}]{/>},
  moredelim=[l][\bfseries\color{Maroon}]{>},
  morecomment=[s]{<?}{?>},
  morecomment=[s]{<!--}{-->},
  commentstyle=\color{gray},
  stringstyle=\color{blue},
  identifierstyle=\color{red}
}

\usepackage{moreverb}

\usepackage[nounderscore]{syntax}

\usepackage[pdftex]{graphicx}
\graphicspath{{./figures/}}

\DeclareGraphicsExtensions{.pdf}

\usepackage[cmex10]{amsmath}
\usepackage{amssymb}
\usepackage{mathtools}
\usepackage{amsthm}
\usepackage{amsfonts}
\usepackage{gensymb}

\newtheorem{theorem}{Theorem}
\newtheorem{lemma}[theorem]{Lemma}

\usepackage{subfig}

\usepackage{algorithm}
\usepackage{algorithmic}

\usepackage{multirow} 
\usepackage{rotating} 
\usepackage{booktabs} 
\usepackage{colortbl} 
\usepackage{tablefootnote} 
\usepackage{array}
\newcolumntype{L}[1]{>{\raggedright\let\newline\\\arraybackslash\hspace{0pt}}m{#1}}
\newcolumntype{C}[1]{>{\centering\let\newline\\\arraybackslash\hspace{0pt}}m{#1}}
\newcolumntype{R}[1]{>{\raggedleft\let\newline\\\arraybackslash\hspace{0pt}}m{#1}}

\usepackage[pdftex,colorlinks=true,urlcolor=blue,citecolor=blue]{hyperref}

\usepackage{xspace}

\usepackage{enumitem}

\newcommand{\etal}{et~al.\@}

\usepackage{blindtext}

\begin{document}

\title{\huge Folding-Free Zero-Noise Extrapolation by Layout-induced Noise Diversity}

\author{
Debarthi Pal and Yogesh Simmhan
\\~\\
\textit{Indian Institute of Science, Bangalore}\\
\texttt{Email: \{debarthipal, simmhan\}@iisc.ac.in}\\
}

\date{}

\maketitle

\begin{abstract}
    Near-term quantum processors operate in a noise‑dominated regime, motivating error‑mitigation techniques that recover accurate expectation values without full fault tolerance. Zero‑Noise Extrapolation (ZNE) is a biased error‑mitigation method that does not provide any rigorous error bound. However, it is still among the most widely used approaches due to its simplicity. Nevertheless its effective application requires nontrivial technical choices, most notably the selection of noise-scaling factors and extrapolation models, making ZNE sensitive to user expertise and often necessitating costly trial-and-error procedures. Here, we introduce Folding-Free Zero-Noise Extrapolation (FF-ZNE), a method that removes the need for noise-factor selection by achieving effective noise amplification without circuit folding. FF-ZNE exploits isomorphic hardware layouts with distinct native noise profiles, such that executing a fixed circuit across these layouts induces controllable variations in the effective noise strength. Under a depolarizing noise model, we analytically show that the resulting extrapolation admits a fixed linear form, eliminating extrapolator choice and enabling a seamless, user-independent mitigation procedure. We further propose two algorithms that identify sets of isomorphic hardware layouts on which a given circuit yields sufficiently distinct expectation values to enable reliable zero-noise extrapolation. Experiments on a 133-qubit IBM Quantum device demonstrate that FF-ZNE yields mitigated expectation values with average deviations of approximately $6\%$ and $4.5\%$ for up to 50-qubit EfficientSU2 (sparse) and Hamiltonian simulation (dense) circuits, respectively, thus showing that this method is scalable, and is applicable for a broad range of circuits. By eliminating noise-factor and extrapolator selection, FF-ZNE transforms zero-noise extrapolation from a technique requiring expert tuning into a practical, scalable, and broadly accessible error-mitigation method for current quantum hardware.

\end{abstract}

\maketitle

\section{Introduction}
\label{sec:intro}

Noise remains the primary obstacle to extracting reliable results from present-day quantum computers as fully fault-tolerant architectures remain beyond current technological reach. As a result, most near-term quantum applications must contend with hardware errors directly, often limiting the achievable accuracy even for relatively shallow circuits. To address this challenge, \textit{Quantum Error Mitigation (QEM)} techniques aim to estimate ideal, noiseless expectation values without the overhead of full error correction, instead trading increased quantum and classical resources for improved accuracy. Various QEM methods have been studied in the literature ~\cite{Temme:2017p180509,  van2023probabilistic, kim2023evidence, virtual, van2022model} and have been shown to provide accurate estimates of noise-free expectation values under differing noise models and resource requirements.

Among these methods, \textit{Zero-Noise Extrapolation (ZNE)}~\cite{Temme:2017p180509} and \textit{Probabilistic Error Amplification (PEA)}~\cite{kim2023evidence} are particularly attractive in near-term settings due to their conceptual simplicity, and relatively low overhead. Both techniques operate by deliberately amplifying the effective noise in the computation and subsequently performing a classical extrapolation to infer the expectation value at the zero-noise limit. In ZNE, noise amplification is typically achieved by increasing the depth of the circuit through gate insertions (called folding) (see Sec.~\ref{sec:bg}), while in PEA the noise is amplified by explicitly injecting learned noise channels after each circuit layer. By learning the noise, and amplifying it, PEA can provide a rigorous error bound (apart from the extrapolation bias), which ZNE with gate folding fails to provide. Nevertheless, both methods can be experimentally validated and have seen widespread adoption.

Despite their apparent simplicity, the practical performance of ZNE and related extrapolation-based techniques is sensitive to methodological choices. In particular, the selection of noise amplification factors and the choice of extrapolator can lead to different mitigation outcomes, even for the same circuit and observable. Table~\ref{tab:motivation} illustrates a representative scenario in which different combinations of noise factors and extrapolator result in substantially different expectation values (\textit{expval}) at the zero-noise limit for the same circuit family.

\begin{table*}[t]
\small
\setlength{\tabcolsep}{3pt} 
\def\thickhline{\noalign{\hrule height1pt}} 
\caption{The mitigated expectation value for different $n$ (number of qubits) and $d$ (2-qubit depth) in a mirrored brickwork circuit~\cite{belkin2024approximate}, with observable 
$O = \frac{1}{n}\sum_i Z_i$ using standard ZNE~\cite{Temme:2017p180509} for different noise factors and extrapolators.}
\label{tab:motivation}
\centering

\begin{tabular}{C{3cm}cccr}
\hline
\textbf{Size of the circuit} & \textbf{Ideal expval} & 
\multicolumn{2}{c}{\bf ZNE parameters} &  \textbf{Mitigated expval} \\
 \textbf{($n \times d$)} &  & \em Noise factor & \em Extrapolator & \\
\hline
\hline
$8 \times 4$   & 1 & $(1,3,5)$       & linear       & 0.992 \\
$8 \times 4$   & 1 & $(1,3,5)$       & exponential  & 1.023 \\
\thickhline
$40 \times 40$ & 1 & $(1,3,5)$       & linear       & 0.581 \\
$40 \times 40$ & 1 & $(1,3,5)$       & exponential  & 23.815 \\
\hline
$40 \times 40$ & 1 & $(1,1.2,1.4)$   & linear       & 1.264 \\
$40 \times 40$ & 1 & $(1,1.2,1.4)$   & exponential  & $5.48 \times 10^{24}$ \\
\bottomrule
\end{tabular}

\end{table*}

In practice, there is no universally applicable or principled prescription for choosing the noise amplification factors and the extrapolation model in ZNE. Their selection, therefore, often relies on prior experience, heuristic guidelines \cite{majumdar, mitiq}, or empirical trial-and-error. In the absence of prior intuition, a naïve approach that tests parameter choices commonly reported in the literature may require executing on the order of $28$ distinct circuit instances solely to identify a suitable configuration (see Table~\ref{tab:nf}).

In this work, we propose Folding-Free Zero-Noise Extrapolation (FF-ZNE), an alternative approach to ZNE that avoids circuit folding altogether. Instead of amplifying noise by increasing circuit depth, FF-ZNE exploits the intrinsic non-uniformity of noise across current quantum devices by executing the same circuit on different isomorphic hardware layouts with distinct noise profiles. These executions yield expectation values with increasing deviation from the ideal as the effective noise of the layout increases, thereby emulating the noise-amplification mechanism of standard ZNE without modifying the circuit structure. 
We analytically show that under a depolarizing noise model, the appropriate extrapolator for FF-ZNE is fixed to linear. While the noise in practical quantum devices is not purely depolarizing, it can be converted to an effective stochastic Pauli channel via Pauli twirling~\cite{wallman2016noise}. We further numerically demonstrate that linear extrapolation remains effective under this twirled noise model, providing a principled justification for extrapolator selection in FF-ZNE.

We evaluate FF-ZNE on a 133-qubit IBM Quantum device using EfficientSU2 (sparse) and Hamiltonian simulation (dense) circuits of up to 50 qubits. Across these benchmarks, FF-ZNE achieves an average deviation of $< 6\%$ from the ideal expectation values for 50-qubit circuits. 
FF-ZNE removes the need for heuristic parameter tuning or repeated trial-and-error to select noise amplification factors and extrapolation models, addressing a key practical limitation of existing ZNE-based approaches. Our results indicate that FF-ZNE is effective at scale, and is applicable across different circuit classes on current hardware.

The rest of this paper is organized as follows: Section~\ref{sec:bg} introduces quantum error mitigation and discusses the limitations of current literature; in Section~\ref{sec:layout-based-zne}, we present FF-ZNE and two algorithms for identifying high-quality isomorphic circuit layouts to execute the circuit; Section~\ref{sec:result} reports experimental results for sparse EfficientSU2 circuits using Qiskit’s default layout noise profiling method~\cite{nation2023suppressing}; in Section~\ref{sec:qic} we propose an alternative noise profiling approach based on Quality Indicator Circuits~\cite{srivastava2025lightweighttargetedestimationlayout} that overcomes limitations of our default approach for dense circuits; Section~\ref{sec:exp_results_ham_circuit} reports experimental results for dense Hamiltonian simulation circuits using noise profiling approach based on Quality Indicator Circuits and we conclude in Section~\ref{sec:conclusion}.

\section{Related Work and Motivation}
\label{sec:bg}

\subsection{Quantum Error Mitigation}

Current quantum computers operate in the presence of hardware noise, which significantly limits computational accuracy. Although quantum error correction offers a long-term path toward fault-tolerant computation, present-day devices lack the qubit counts, coherence times and gate fidelities required to implement it at scale. As a result, \textit{Quantum Error Mitigation (QEM)} has emerged as a practical framework for improving the reliability of near-term quantum computing without the overhead of full error correction.
In general, QEM techniques aim to reduce the impact of noise on measured expectation values by modifying the way circuits are executed or the way measurements are processed. These typically perform additional circuit executions, calibration steps or classical post-processing to achieve better accuracy. Importantly, different QEM strategies target distinct error sources such as dephasing, gate errors or readout errors, and offer different trade-offs in effectiveness, overhead and hardware dependence.

Dynamical decoupling (DD) \cite{viola1999dynamical} suppresses idle‑time dephasing through well‑timed pulse sequences and remains valuable for protecting qubits during idle times. Its effectiveness depends on hardware‑specific pulse calibrations, and the duration of the idle time. Pauli twirling \cite{wallman2016noise} is similarly useful to convert the overall noise model of the system into stochastic Pauli noise by randomly inserting Pauli gates while maintaining functional equivalence. While this makes the noise easier to model and simulate, it does not reduce the underlying noise strength and can even increase the effective error rate due to the insertion of Pauli gates. Thus, while DD and twirling serve important roles in noise management and characterization, they do not directly address gate‑level noise.

Measurement errors are mitigated using methods such as \textit{M3}~\cite{nation2021scalable} and \textit{Twirled Readout Error Extinction (TREX)}~\cite{van2022model}. M3 models measurement errors using a reduced assignment matrix over observed outcomes, allowing numerical and scalable inversion of the assignment matrix to mitigate noise. TREX applies random Pauli-X twirling followed by classical bit flips to diagonalize the readout error matrix, enabling straightforward inversion after modest calibration. Both improve readout accuracy, differing in the approach and the classical postprocessing overhead, but do not address gate-level noise.

\textit{Probabilistic Error Cancellation (PEC)}~\cite{chen2023error} reconstructs the ideal circuit as a weighted linear combination of noisy circuits, producing unbiased estimates for twirled quantum circuits. However, the sampling overhead, and hence the number of circuit executions required, of PEC grows exponentially with circuit depth and noise strength, making PEC impractical for large or deep circuits despite its theoretical accuracy advantage.

Finally, \textit{Zero-noise extrapolation (ZNE)} estimates ideal expectation values by deliberately amplifying circuit noise, commonly through gate folding \cite{Temme:2017p180509} or deliberately injecting noise after learning the noise model for each circuit layer \cite{kim2023evidence}, and extrapolating the expectation value to the zero-noise limit using a chosen functional model. ZNE directly addresses gate-level noise during circuit execution, which often dominates in moderately deep circuits, and unlike PEC does not lead to an exponential number of circuit executions. ZNE, together with DD, Pauli twirling and measurement error mitigation, has been shown to produce significantly good mitigation in multiple studies \cite{majumdar, kim2023evidence}. Nevertheless, ZNE is a biased error mitigation technique (discussed in the following subsection) that, unlike PEC, does not provide a rigorous error bound on the mitigated expectation value.

In the next subsection, we focus on ZNE, examining its working principle and practical limitations.

\subsection{Zero-Noise Extrapolation (ZNE)}
Zero-noise extrapolation (ZNE) is a widely used QEM techniques due to its conceptual simplicity and relatively low quantum overhead. Numerous studies have demonstrated its practical effectiveness and investigated best practices for gate folding strategies~\cite{schultz2022impact}, extrapolation methods~\cite{giurgica2020digital} and hybrid approaches that combine ZNE with other mitigation techniques~\cite{majumdar}.

In ZNE, noise amplification is achieved by repeating quantum gates in a manner that preserves the ideal circuit functionality. For instance, a unitary gate $\hat{U}$ can be folded as $\hat{U}\hat{U}^\dagger\hat{U}$, where $\hat{U}^\dagger\hat{U}$ (equivalently $\hat{U}\hat{U}^\dagger$) implements the identity operation. This results in an effective threefold amplification of gate-level noise, without losing functional equivalence. This is commonly denoted by a noise-scaling factor $\lambda = 3$, with $\lambda = 1$ corresponding to the original circuit. By executing multiple functionally equivalent circuits at different values of $\lambda$, expectation values are obtained as a function of the noise scale and subsequently extrapolated to the zero-noise limit, yielding an estimate of the ideal expectation value (see Fig.~\ref{fig:zne_extrapolation}).

\begin{figure}[t]
    \centering
    \includegraphics[width=0.7\linewidth]{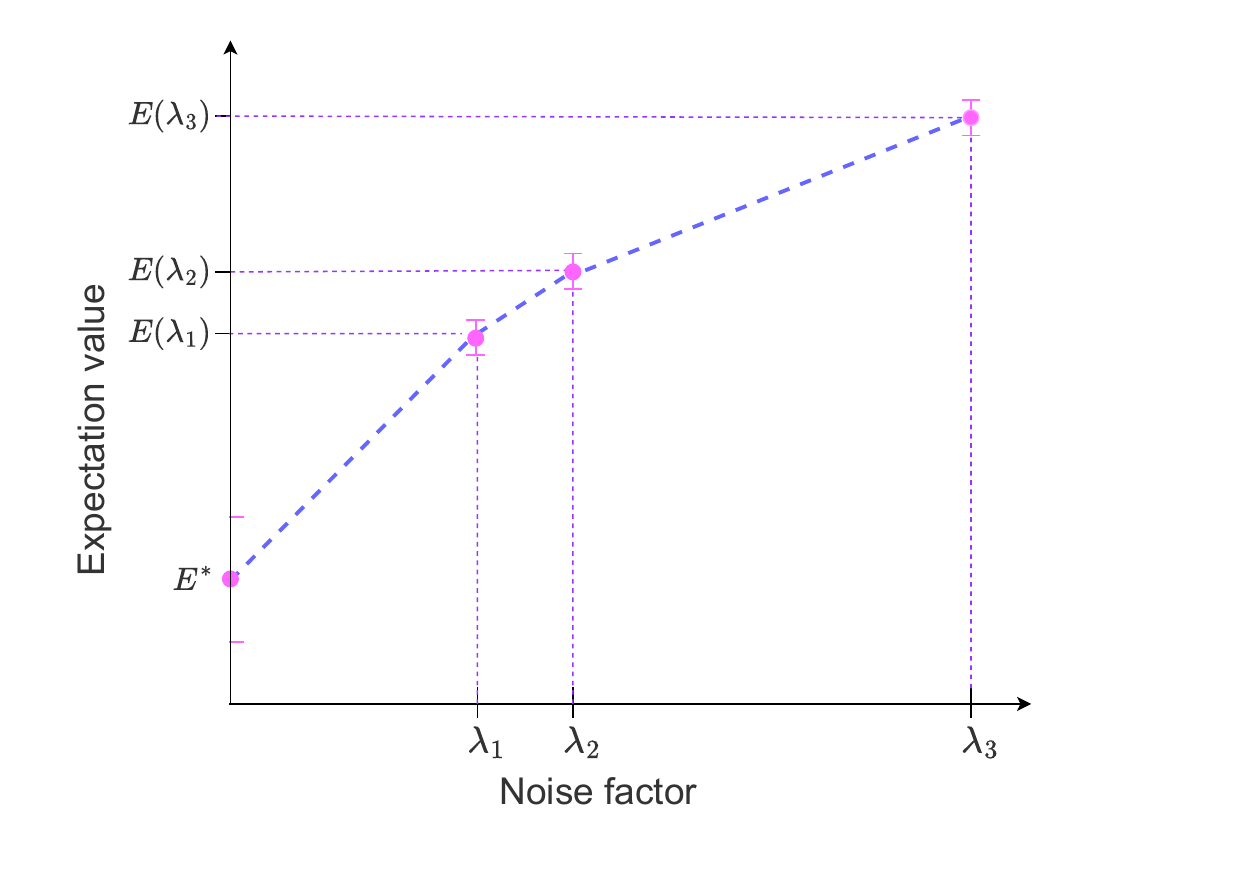}
    \caption{Illustration of ZNE technique, adapted from~\cite{majumdar}. 
    The expectation values $E(\lambda_{1}), E(\lambda_{2})$ and $E(\lambda_{3})$ are measured at increasing noise factors $\lambda_{1} < \lambda_{2} < \lambda_{3}$, and the zero-noise value $E^{*}$ is estimated by extrapolating these results to the zero-noise limit.}
    \label{fig:zne_extrapolation}
    
\end{figure}

A key assumption underlying ZNE is that increasing $\lambda$ scales the strength of noise without altering its qualitative form. In real quantum hardware, however, this assumption is only approximately satisfied. Different error sources, such as decoherence, gate error and control imperfections, exhibit distinct scaling behaviors under gate folding, which can degrade extrapolation accuracy. This issue is addressed using Pauli twirling~\cite{wallman2016noise}, which transforms hardware-specific noise into an effective stochastic Pauli channel with more uniform scaling properties. As demonstrated in~\cite{majumdar}, the combination of Pauli twirling with ZNE can substantially improve mitigation performance.

It is important to note that ZNE is fundamentally a \textit{biased} mitigation technique~\cite{mohammadipour2025direct}, where the model yields an estimate $\hat{E} = E^* + b$, $E^*$ being the ideal expectation value. The bias arises from the fundamental difference in the noise model after amplification, the imperfection of expectation value calculation due to finite sampling, and the uncertainty of extrapolation. A major drawback of ZNE is that it cannot provide a rigorous bound on the bias $b$.

\subsection{Limitations of Standard ZNE}
\label{limitation_zne}

ZNE faces two distinct classes of limitations: a fundamental theoretical bias, which defines the mathematical limitations of ZNE, and a practical challenge regarding parameter selection, which often makes the use of ZNE difficult for large-scale use-cases. In this work, we specifically target the second challenge i.e. addressing the bottleneck of parameter selection by providing an algorithmic framework that eliminates the need for ad-hoc noise factor selection, thus making ZNE more easily accessible to users in general. For error mitigation techniques with rigorous error bounds (e.g., necessary for quantum advantage \cite{lanes2025frameworkquantumadvantage}) should look into PEC \cite{chen2023error}, and its variants such as shaded lightcone \cite{qdc2025_slc_pec} and propagated noise absorption \cite{qdc2025_pna}.

\textbf{The parameter-selection problem:} A critical practical limitation of ZNE lies in the selection of noise-scaling factors and extrapolation models. Although ZNE is formally independent of circuit size, its empirical performance is sensitive to these choices. As illustrated in Table~\ref{tab:motivation}, two circuits belonging to the same family  but differing in qubit count and depth can exhibit markedly different ZNE performance when evaluated using identical noise factors and extrapolators. E.g., the noise factors $(1,3,5)$, which yield accurate results for an $8 \times 4$ circuit using both linear and exponential extrapolation, perform poorly for a $40 \times 40$ circuit. This indicates that ZNE is not self-calibrating, and suitable noise factors and extrapolators cannot be transferred reliably across circuit sizes or depths.

Several approaches have been proposed to address this parameter-selection problem, but all rely on some form of surrogate calibration or exhaustive search. \textit{Mitiq}~\cite{mitiq} evaluates multiple noise factors and extrapolators using Randomized Benchmarking (RB) circuits and selects the combination that performs best. However, RB circuits differ structurally from application circuits, and noise-scaling behavior observed under RB does not necessarily generalize to the circuit of interest. Another study has proposed constructing a database of empirically optimal noise factors and extrapolators indexed by circuit characteristics such as qubit count and two-qubit gate depth~\cite{majumdar}. While effective in limited settings, this requires extensive upfront characterization and ongoing maintenance for each circuit family, limiting its portability and scalability.

Alternative methods attempt to avoid explicit noise amplification by separating different noise sources or allowing noise to act naturally on the circuit~\cite{10313621}. While conceptually simple, these approaches become impractical for larger circuits and typically rely on circuit cutting~\cite{PhysRevLett.125.150504}, which introduces exponential overhead. A widely used practical strategy is to tune ZNE parameters by Cliffordizing the circuit of interest~\cite{qiskit_building_confidence_2024}, enabling direct comparison against classically tractable ideal values. While this avoids maintaining external databases and often yields good performance, it still requires exhaustively testing multiple candidate noise factors and extrapolators. Crucially, its success depends on the initial choice of noise factors to be tested; if this is poorly matched to the circuit, no amount of validation can recover accurate extrapolation.

\begin{table}[t]
\small
    \centering
    \caption{Common noise factors used in the literature for ZNE}
    \begin{tabular}{c|c|c}
    \hline
        \hline
        \bf Majumdar \etal~\cite{majumdar} & \bf Mitiq~\cite{mitiq_documentation} & \bf Kim \etal~\cite{kim2023evidence} \\
        \hline\hline
        [1, 3, 5], [1, 1.1, 1.2] & [1, 2, 3], [1, 1.5, 2, 2.5, 3],& [1, 1.2, 1.6]\\
        &[1, 3, 5, 7], [1.2, 1.4, 1.6, 1.8, 2.0], [1, 3] &\\
        \hline
    \end{tabular}
    \label{tab:nf}
\end{table}

The noise factors most commonly reported in literature for ZNE are listed in Table~\ref{tab:nf}. In the worst case, Cliffordization-based tuning requires evaluating all such candidates, resulting in 28 distinct circuit executions. On current quantum hardware, this corresponds to approximately $6.5$~minutes of execution time considering $10,000$ shots, making parameter tuning the dominant overhead of ZNE.  
Consequently, fundamental challenge in ZNE is that accurate mitigation requires prior knowledge of how noise scales with circuit execution, and yet this scaling behavior is itself circuit-dependent and generally unknown. Existing solutions to address this through heuristic selection, surrogate calibration or exhaustive trials do not scale favorably or need human intervention. This motivates the need for an alternative formulation of ZNE that avoids explicit noise-factor selection and extrapolator tuning altogether.

\subsection{Motivation}
In this work, we introduce \textit{Folding-Free Zero-Noise Extrapolation (FF-ZNE)}, which removes the need for gate folding and determines the extrapolation directly from hardware-executed circuits. Using a small number of isomorphic circuit layouts, FF-ZNE infers noise scaling without requiring prior information on suitable noise factors or extrapolation models. In our implementation, only three circuit executions are required. In the worst-case scenario, where no prior information is available and conventional ZNE relies on exhaustive trial-and-error over commonly used noise factors, this corresponds to a reduction in the number of circuit executions from $28$ (if all the standard noise factors commonly reported in the literature are tested) to $3$, providing an approximately $9\times$ decrease in quantum overhead (see Fig.~\ref{fig:runtime_comparison}). On real hardware, this translates to a decrease in total execution time from $6.5$ minutes to $42$ seconds. When partial prior knowledge of suitable noise factors is available, conventional ZNE may require fewer executions. However, FF-ZNE eliminates the need for such prior tuning entirely, offering a consistent and automated alternative.

\begin{figure}[t]
    \centering
    \includegraphics[width=0.6\textwidth]{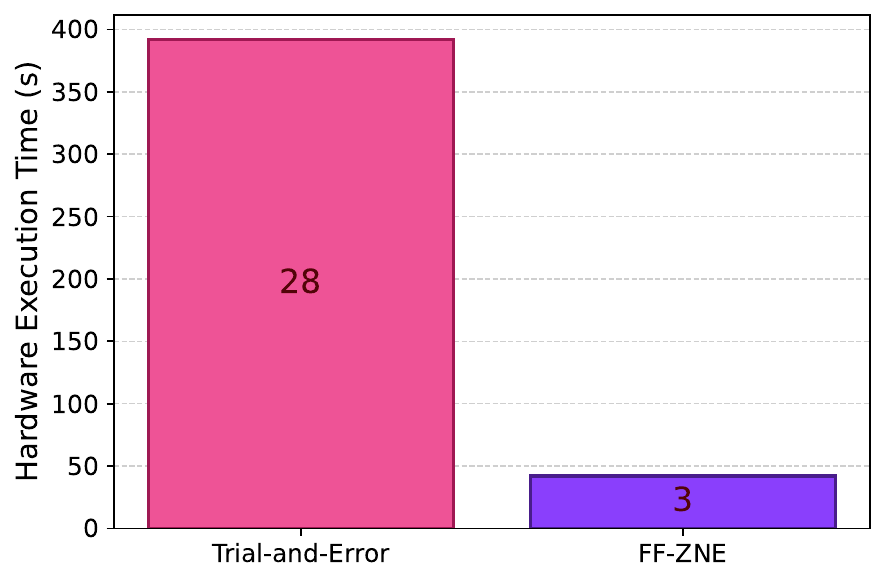}
    \caption{Comparison of the worst case quantum execution time for conventional ZNE and FF-ZNE. In the worst case, conventional ZNE requires exhaustive testing of standard noise-factor candidates (28 circuit executions), whereas FF-ZNE uses 3 isomorphic circuit executions, yielding an approximately $9\times$ reduction. On the hardware, this corresponds to a decrease in total execution time from $6.5$ minutes to $42$ seconds; conventional ZNE may require fewer executions when suitable noise factors are known \textit{a priori}.}
    \label{fig:runtime_comparison}
\end{figure}

\section{Folding-Free ZNE (FF-ZNE) Technique}
\label{sec:layout-based-zne}

In this section, we introduce FF-ZNE. We discuss two algorithms for the selection of layouts on which to execute the circuit, and also show that a \textit{linear} extrapolator is sufficient for FF-ZNE.

\subsection{Layout dependent impact of noise}
\label{sec:diff-layouts-diff-alg}

A \textit{layout} specifies the assignment of virtual qubits to physical qubits on the target device and is a central step in quantum circuit \textit{transpilation}. Layout selection is typically guided by SWAP minimization, an NP-hard problem addressed through heuristics~\cite{minimizeSWAP,sabre,zou2024lightsabre,tket}. Consequently, multiple circuit-structure–preserving isomorphic layouts naturally arise, which can nevertheless exhibit markedly different effective noise characteristics. We refer to the aggregate noise associated with a given layout as its \textit{noise profile}. Following the prior approach~\cite{PRXQuantum.4.010327}, we adopt a two-step strategy: (1) An initial layout is obtained to minimize SWAP overhead without explicit consideration of noise; and then (2) All layouts isomorphic to the resulting mapping are generated and scored according to their noise profile. This enables access to a set of layouts that preserve circuit structure while spanning a broad range of noise profiles.

Several methods for estimating layout-dependent noise have been proposed, differing in both the underlying noise models and the interpretation of their scores~\cite{PRXQuantum.4.010327, srivastava2025lightweighttargetedestimationlayout, wilson2020just, US20250165836A1}. In this work, we employ \textit{Mapomatic}~\cite{PRXQuantum.4.010327}, the default layout scoring method in Qiskit~\cite{javadi2024quantum}, and \textit{Quality Indicator Circuits (QIC)}~\cite{srivastava2025lightweighttargetedestimationlayout}. Both approaches provide lightweight, hardware-efficient estimates of layout quality and enable systematic ranking of isomorphic layouts according to their noise profiles.

\subsection{FF-ZNE with layout-induced noise diversity}
\label{sec:diff-layouts-diff-alg_part_b}

In conventional ZNE, the zero-noise expectation value of an observable is inferred by extrapolating measurements obtained from noise-amplified variants of the target circuit, typically generated through gate folding. In contrast, our approach exploits \textit{spatial noise inhomogeneity} intrinsic to current quantum processors. Specifically, we execute a fixed circuit on $k$ isomorphic layouts that preserve circuit structure while inducing different effective noise profiles. This gives a set of \textit{expectation values} whose deviations from the ideal correlate with the noise characteristics of the underlying layouts, thereby providing an alternative means of sampling the noise response of the circuit without modifying its gate structure. By \textit{extrapolating} these measurements to the zero-noise limit, we obtain an estimate of the noiseless expectation value. Crucially, this strategy avoids the need to explicitly construct or calibrate circuit-dependent noise amplification factors, which are often difficult to identify reliably in practice.

\subsection{Selection of the extrapolator}
\label{lin_extrapolator}
Next, we show that under the consideration of a \textit{global depolarizing noise}, the extrapolator is always \textit{linear} when the noise profiles of the selected isomorphic layouts are multiplicative constants of each other. 

For an $n$-qubit density matrix $\rho_l$ corresponding to a circuit executed on layout $l$, the effect of depolarizing noise can be represented as:
\[
\rho_l = (1 - p_l)\rho_{\text{in}} + \frac{p_l I}{2^n}.
\]
Therefore, the expectation value of an observable $O$ becomes
\[
\langle O\rangle_l = (1 - p_l)\langle O\rangle_{\text{ideal}} + p_l \langle O\rangle_{\text{noisy}}.
\]
If we consider two different layouts $l_1$ and $l_2$, and their probability of errors $p_{l_1}$ and $p_{l_2}$, where $p_{l_2} = \delta\, p_{l_1}$, then the corresponding expectation values for the execution on the two layouts are:
\begin{equation}
    \label{eq:1}
    \langle O\rangle_{l_1} = (1 - p_{l_1})\langle O\rangle_{\text{ideal}} + p_{l_1}\langle O\rangle_{\text{noisy}}
\end{equation}
\begin{equation}
    \label{eq:2}
    \langle O\rangle_{l_2} = (1 - \delta p_{l_1})\langle O\rangle_{\text{ideal}} + \delta p_{l_1}\langle O\rangle_{\text{noisy}}
\end{equation}
From Eq.~\ref{eq:1} and ~\ref{eq:2}, we obtain:
\begin{equation}
    \label{eq:3}
    \langle O\rangle_{\text{ideal}}
   = \frac{\delta}{\delta - 1}\,\langle O\rangle_{l_1}
   - \frac{1}{\delta - 1}\,\langle O\rangle_{l_2}.
\end{equation}
Eq.~\ref{eq:3} corresponds to \textit{Richardson extrapolation}~\cite{richardson1911ix, richardson1927viii}, which is linear in nature.

\begin{figure}[t]
    \centering
    \includegraphics[width=0.7\linewidth]{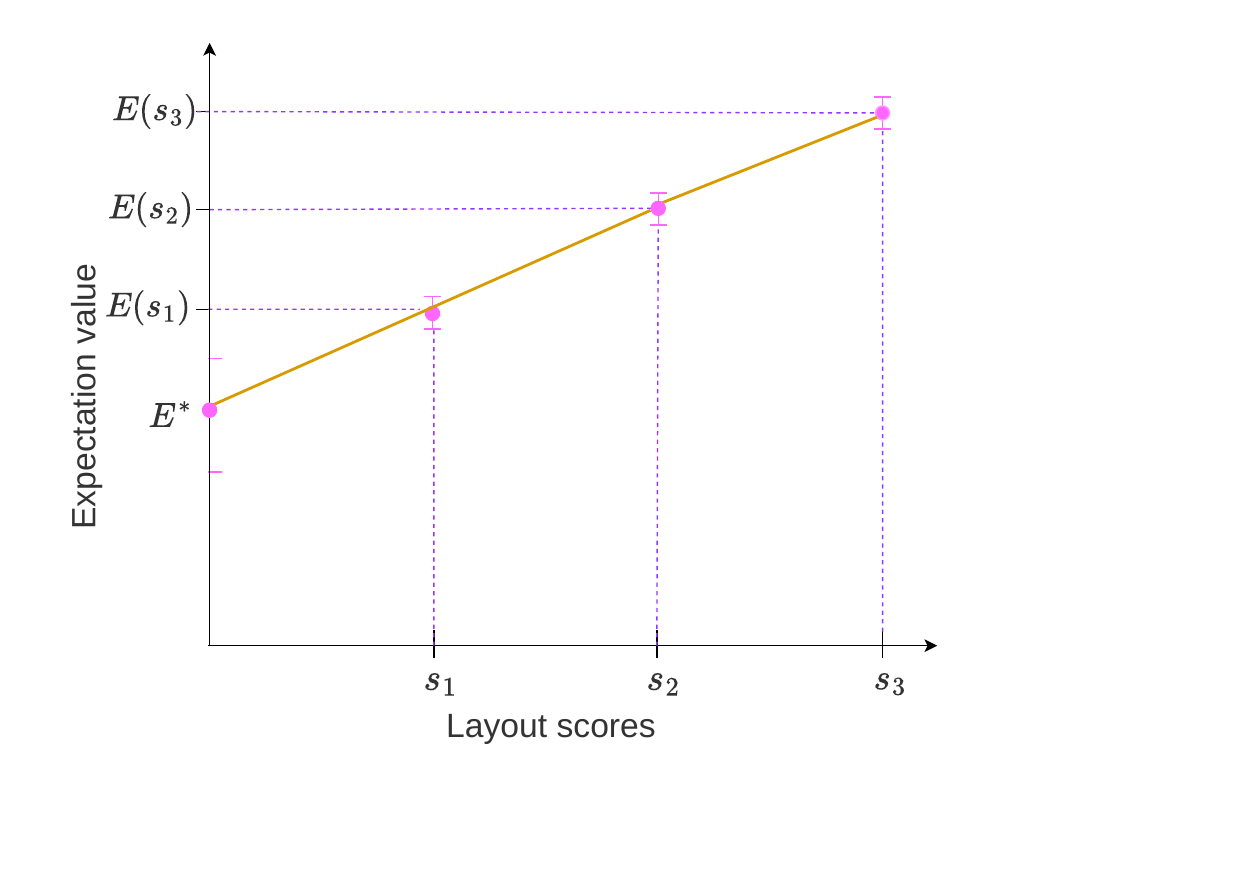}
    \caption{Illustration of FF-ZNE technique, with x-axis as the layout scores. The expectation values $E(s_{1}), E(s_{2})$ and $E(s_{3})$ are measured at increasing noise factors $s_{1} < s_{2} < s_{3}$, which are layout scores in this case, and the zero-noise value $E^{*}$ is estimated by extrapolating these results to the zero-noise limit.}
    \label{fig:zne_scores_exp}
    
\end{figure}

In practice, the noise affecting the device cannot be accurately described by a global depolarizing channel. Instead, we apply Pauli twirling to the native noise, which maps the underlying noise channel to an effective stochastic Pauli channel. In Sec.~\ref{sec:result}, we experimentally demonstrate that linear extrapolation remains effective under this stochastic Pauli noise model. Moreover, the physical error probability $p_l$ associated with a given layout $l$ is not directly accessible. Instead, layout-selection techniques such as Mapomatic and QIC provide a heuristic score $s_l = f(p_l)$, where $f$ is a generally monotonic, layout-dependent function of the underlying error probability. Consequently, recovering a noise-free estimate requires performing linear extrapolation directly with this heuristic score, rather than on the unobservable physical noise parameters as shown in Fig~\ref{fig:zne_scores_exp}.

\subsection{Selection of the isormophic layouts}
\label{subsec:selection}

The selection of the isomorphic layouts is critical for the success of FF-ZNE. If we select isomorphic layouts with a very similar noise profile, then the expectation values may not be separated enough to allow a good extrapolation. In addition, the choice of the isomorphic layouts $l_1, l_2, ..., l_k$, on which the circuit is to be executed, should be selected such that $|s_{l_{i+1}} - s_{l_{i}}|$ is roughly equal for all $i$ so that it corresponds to a linear extrapolator. Although two expectation values are sufficient to extrapolate using a linear curve, the usual practice of standard ZNE is to use three noise factors to better navigate the noise spectrum \cite{Temme:2017p180509, majumdar}; accordingly, we select three layouts for our approach. 
Finally, the layouts should also cover a significant portion of all isomorphic layouts. This ensures that the spectrum of noise variation of the hardware is properly explored, and not only a subset. To incorporate these requirements, we propose two algorithms to find the three desired layouts.

\subsubsection{Exhaustive Search}

Algorithm~\ref{alg:symmetric_layout_triples} performs an exhaustive search to find two layouts $(l_i, l_j)$ such that $\Delta =||s_{l_{i}} - s_{l_{j}}| - |s_{l_{1}} - s_{l_{i}}||$ is minimum. We always select the best layout with the minimum noise profile ($l_1$), and find the other two layouts by testing for every layout pair $(l_i, l_j)$, $i, j \neq 1$, $i < j$, to select the pair that gives the required condition stated above.
The algorithm starts with a list of circuit layouts sorted by their noise scores and fixes the layout $l_1$ with the lowest noise score. It then scans all possible pairs of the remaining layouts $(l_i, l_j)$, $1 < l_i < l_j$, such that $\Delta$ is minimum. Note that $\Delta$ may turn out to be minimum for very close-by layouts as well, which does not cover the full noise spectrum of the hardware. Therefore, we consider the weighted average of both $\Delta$ and the range of the layouts considered, controlled by the \textit{trade-off} parameter $a \in [0,1]$. It maintains a balance between selecting layouts that are far apart in the sorted list of isomorphic layouts and selecting layouts with low $\Delta$. A higher value of $a$ gives more importance to cover a wide spectrum of layouts, while a lower value gives more importance to the closeness of the score differences. Since the number of isomorphic layouts (integer, often going upto thousands or more) and the noise profile of a layout (between 0 and 1) have very different ranges, the Algorithm normalizes both of these values before calculating the weighted average. In this work, we choose $a=0.1$, which prioritizes minimizing $\Delta$ while still ensuring some coverage of the spectrum of layouts.

\begin{algorithm}[t!]
\caption{Exhaustive Search to identify best layout pair}
\label{alg:symmetric_layout_triples}
\small
\begin{algorithmic}[1]
    \REQUIRE
        Sorted list of $(\text{score}, \text{layout})$ pairs: $\text{scores\_sort} = \{(s_{l_{k}}, l_k)\}_{k=1}^m$; 
        trade-off parameter $a \in [0,1]$.
    \ENSURE
        The best double $(s_{l_{i}}, s_{l_{j}})$ minimizing the normalized weighted cost function, $s_{l_{1}}$ being the best layout score.

    \STATE Extract score list $\mathbf{s} \leftarrow [s_{l_{1}}, s_{l_{2}}, \ldots, s_{l_{m}}]$ and layout list $\mathbf{L} \leftarrow [l_1, l_2, \ldots, l_m]$
    \STATE Initialize empty list $\mathcal{P} \leftarrow []$
    \FOR{$i = 1$ to $m-1$}
        \STATE $\delta_1 \leftarrow |s_{l_{1}} - s_{l_{i}}|$
        \FOR{$j = i+1$ to $m$}
            \STATE $\delta_{\text{raw}} \leftarrow \big|\; \delta_1 - |s_i - s_j|\; \big|$
            \STATE Append $(i, j, \delta_{\text{raw}})$ to $\mathcal{P}$
        \ENDFOR
    \ENDFOR

    \STATE Extract vectors $i_{\text{vals}}, j_{\text{vals}}, \delta_{\text{vals}}$ from $\mathcal{P}$
    \STATE Normalize indices: 
    \[
        j_{\text{norm}} \leftarrow \frac{j_{\text{vals}}}{m-1}, 
        \quad (1-j_{\text{norm}}) \text{ computed element-wise.}
    \]

    \STATE Perform min–max normalization of $\delta_{\text{vals}}$:
    \[
\begin{aligned}
\delta_{\min} &= \min(\delta_{\text{vals}}), \delta_{\max} = \max(\delta_{\text{vals}}), \\[6pt]
\delta_{\text{norm}} &\leftarrow
\begin{cases}
\dfrac{\delta_{\text{vals}} - \delta_{\min}}{\delta_{\max} - \delta_{\min}}, & \text{if } \delta_{\max} > \delta_{\min}, \\[6pt]
0, & \text{otherwise.}
\end{cases}
\end{aligned}
\]

    \STATE Compute weighted cost for each pair:
    \[
        \text{cost} = a \cdot (1 - j_{\text{norm}}) + (1 - a) \cdot \delta_{\text{norm}}
    \]

    \STATE Sort all pairs $(i, j)$ by ascending cost and select the lowest-cost double.
    \STATE For each selected $( s_{l_{i}}, s_{l_{j}})$, report:
        \begin{itemize}
            \item raw scores $(s_{l_{1}}, s_{l_{i}}, s_{l_{j}}), s_{l_{1}}$ being the best layout score
            \item $\delta_{\text{raw}}$, $\delta_{\text{norm}}$, and $j_{\text{norm}}$
            \item corresponding layouts $(l_1, l_i, l_j), l_1$ being the best layout
        \end{itemize}
    \STATE \textbf{return} lowest double $(s_{l_{i}}, s_{l_{j}})$ with their layouts and cost metrics
\end{algorithmic}
\end{algorithm}

\begin{algorithm}[t!]
\caption{Binary Search for approximately equidistant midpoint}
\label{alg:binary_equidistant}
\small
\begin{algorithmic}[1]

\REQUIRE Sorted list of scores $\mathbf{s} \leftarrow [s_{l_{1}}, s_{l_{2}}, \ldots, s_{l_{m}}]$ and layout list $\mathbf{L} \leftarrow [l_1, l_2, \ldots, l_m]$, tolerance $\epsilon$
\ENSURE Index of score $s_{l_{i}}$ such that $\Delta$ is minimum.

\IF{$m < 3$}
    \RETURN \texttt{null} \COMMENT{Not enough elements}
\ENDIF

\STATE $s_{l_{1}} \gets l_1$, \quad $s_{l_{j}} \gets l_m$
\STATE $low \gets s_{l_{1}}$, \quad $high \gets s_{l_{j}}$
\STATE $best\_idx \gets low$, \quad $best\_diff \gets \infty$

\WHILE{$low \le high$}
    \STATE $mid \gets \lfloor (low + high)/2 \rfloor$
    \STATE $s_{l_{i}} \gets l_{mid}$
    \STATE $\delta_1 \gets s_{l_{i}} - s_{l_{1}}$, \quad $\delta_2 \gets s_{l_{j}} - s_{l_{i}}$
    \STATE $diff \gets |\delta_1 - \delta_2|$

    \IF{$diff < best\_diff$}
        \STATE $best\_diff \gets diff$
        \STATE $best\_idx \gets mid$
    \ENDIF

    \IF{$diff \le \epsilon$}
        \STATE \textbf{break} \COMMENT{Good enough, stop early}
    \ELSIF{$\delta_2 > \delta_1$}
        \STATE $low \gets mid + 1$ \COMMENT{Search right half}
    \ELSE
        \STATE $high \gets mid - 1$ \COMMENT{Search left half}
    \ENDIF

\ENDWHILE

\RETURN $best\_idx$

\end{algorithmic}
\end{algorithm}

\begin{lemma}
    Let $l_1$ be the minimum-noise layout among $m$ isomorphic layouts. The time complexity of Algorithm~\ref{alg:symmetric_layout_triples} for selecting layouts $l_i, l_j$, with $1 < i < j$, that minimize $\Delta$ is $\mathcal{O}(m^2)$.

\end{lemma}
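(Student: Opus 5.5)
We bound the cost of Algorithm~\ref{alg:symmetric_layout_triples} line by line, in the unit-cost model where each arithmetic operation, comparison, absolute value and list append on scalar scores takes constant time. The input list \texttt{scores\_sort} is assumed to be already sorted, as in the \textbf{REQUIRE} clause, so the sort of the $m$ layouts (which would cost $\mathcal{O}(m\log m)$ and is dominated anyway) is not charged. Line~1 extracts the score and layout lists in $\mathcal{O}(m)$ time.

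The main step is the nested loop (lines 3--8). For fixed $i$, computing $\delta_1=|s_{l_1}-s_{l_i}|$ is $\mathcal{O}(1)$. The inner loop runs over $j=i+1,\dots,m$, i.e.\ $m-i$ iterations, each doing a constant number of subtractions and absolute values and one append. Summing over $i$ gives
\[
\sum_{i=1}^{m-1}\bigl(c_1 + c_2(m-i)\bigr) = c_1(m-1) + c_2\frac{m(m-1)}{2} = \mathcal{O}(m^2).
\]
As a by-product, the list $\mathcal{P}$ has exactly $|\mathcal{P}| = m(m-1)/2 = \Theta(m^2)$ entries. Restricting to $1<i<j$ as in the statement only removes the $m-1$ pairs with $i=1$ and does not change the order.

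Next we charge the post-processing to $|\mathcal{P}|$. Extracting $i_{\text{vals}}, j_{\text{vals}}, \delta_{\text{vals}}$ and computing $j_{\text{norm}}$ elementwise are single passes, so they cost $\mathcal{O}(|\mathcal{P}|)$. The min--max normalization takes one pass to find $\delta_{\min}$ and $\delta_{\max}$ and one pass to rescale, again $\mathcal{O}(|\mathcal{P}|)$. The weighted cost is a constant-time formula per pair, so it is also $\mathcal{O}(|\mathcal{P}|)$. Reporting the selected triple and returning it is $\mathcal{O}(1)$.

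The one step needing care is line~13, which says to \emph{sort} all pairs by cost. A literal sort costs $\mathcal{O}(|\mathcal{P}|\log|\mathcal{P}|) = \mathcal{O}(m^2\log m)$, which would exceed the claimed bound. However, only the lowest-cost pair is ever used or returned. So we observe that the sort can be replaced by a single linear scan for the minimum, in $\mathcal{O}(m^2)$ time, without changing the algorithm's output, and we state the lemma for this implementation. Summing all steps, every term is $\mathcal{O}(m^2)$, and the nested loop alone already requires $\Theta(m^2)$ work. The total running time is therefore $\mathcal{O}(m^2)$ (indeed $\Theta(m^2)$).
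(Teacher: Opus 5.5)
Your proof is correct and uses the same core count as the paper's proof: $\Theta(m^2)$ candidate pairs with constant work per pair. The paper bounds this by $(m-1)(m-2)$, while you count the $m(m-1)/2$ pairs the loops actually visit. You go further than the paper by also charging the normalization and selection steps, and in particular by noting that the literal ``sort all pairs by ascending cost'' step would cost $\mathcal{O}(m^2\log m)$, so the $\mathcal{O}(m^2)$ bound only holds if the minimum is found by a linear scan, a point the paper's proof does not address.
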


\begin{proof}
Let $l_1$ denote the fixed layout with the minimum noise profile, and let $\{l_2, \dots, l_m\}$ denote the set of isomorphic layouts with associated scores $\{s_{l_2}, \dots, s_{l_m}\}$. Algorithm~\ref{alg:symmetric_layout_triples}
fixes the first layout $l_1$ and exhaustively searches for the remaining two layouts
$l_i$ and $l_j$, with $1 < i < j \leq m$, and finds the pair $(l_i, l_j)$ for which $\Delta$ is minimized. Each check requires a constant time. The algorithm thus scans over $(m-1)$ layouts for $l_i$ and $(m-2)$ layouts for $l_j$, thus leading to $(m-1)\cdot (m-2) = \mathcal{O}(m^2)$ checks.
\end{proof}

In Algorithm~\ref{alg:symmetric_layout_triples}, as we scale up the size of the target circuit, the number of isomorphic layouts increases, e.g., for a 40-qubit EfficientSU2 circuit on a 133 qubit quantum device the number of layouts $(m)$ was $1615432$. Algorithm~\ref{alg:symmetric_layout_triples} required more than 6 hours to find the three optimal layouts for this case. Therefore, we propose a binary search algorithm to calculate the desired layouts in lower time.

\subsubsection{Binary Search}
Here, we resort to a binary search to determine minimal $\Delta$ faster than Algorithm~\ref{alg:symmetric_layout_triples}. In Algorithm~\ref{alg:binary_equidistant}, we fix $l_1$ and $l_j$ as the first and last layouts, i.e., $j = m$, and perform a binary search to find $l_i$ such that $\Delta$ is minimized. The only difference from standard binary search is that, for some step of the algorithm, even when we search for $i$ between some arbitrary boundary $b_l$ and $b_r$, $\Delta$ is always calculated based on the original noise profile of $l_i$ and $l_j$. This ensures that the middle layout $l_i$ is decided based on the first and last layouts, and not on some arbitrary boundary. Throughout the process, the algorithm tracks the best candidate encountered and terminates early if a predefined tolerance is met. The output is the index of the layout that best approximates an equidistant midpoint between $l_1$ and $l_j$.

\begin{lemma}
    Given the first and last layouts $l_{1}$ and $l_{j}$, the time complexity of Algorithm~\ref{alg:binary_equidistant} for selecting a layout $l_{i}$, with $1 < i < j$, that minimizes $\Delta$ is $\mathcal{O}(m \log m)$, where $m$ is the number of isomorphic layouts.
\end{lemma}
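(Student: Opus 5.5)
The bound $\mathcal{O}(m\log m)$ comes from two parts of the pipeline: preparing the sorted list of layout scores that Algorithm~\ref{alg:binary_equidistant} takes as input, and the binary search loop itself. I will bound each part separately and add them.

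\textbf{Preprocessing.} The $m$ isomorphic layouts are produced with their noise scores $s_{l_1},\dots,s_{l_m}$, each obtained by Mapomatic or QIC. I treat a single score lookup or comparison as a constant-time operation. Ordering the scores with any comparison-based sort, such as merge sort, costs $\mathcal{O}(m\log m)$. Reading off the first and last entries, which are $l_1$ and $l_j=l_m$, is then $\mathcal{O}(1)$. This sorting step is the dominant cost and is where the $m\log m$ term comes from.

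\textbf{Search loop.} Each iteration of the \textbf{while} loop does only a constant amount of work:
\begin{itemize}
\item it computes $mid$, one index access $l_{mid}$, and the two differences $\delta_1=s_{l_i}-s_{l_1}$ and $\delta_2=s_{l_j}-s_{l_i}$;
\item it computes $diff=|\delta_1-\delta_2|$, performs one comparison to update $best\_diff$, and one branch to shrink the interval.
\end{itemize}
I will show the number of iterations is $\mathcal{O}(\log m)$. Let $w_t=high-low+1$ be the width of the interval at iteration $t$. The update $low\gets mid+1$ or $high\gets mid-1$ with $mid=\lfloor(low+high)/2\rfloor$ gives $w_{t+1}\le \lfloor w_t/2\rfloor$. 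Starting from $w_0\le m$, the loop therefore stops after at most $\lfloor\log_2 m\rfloor+1$ iterations. The early-exit \textbf{break} can only reduce this count. So the loop costs $\mathcal{O}(\log m)$.

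One point needs care. Algorithm~\ref{alg:binary_equidistant} always measures $\Delta$ against the original endpoints $l_1$ and $l_m$, not against the current bounds $low$ and $high$. This does not change the per-iteration cost, which is still constant. It also does not affect termination, because the interval shrinks regardless of which branch is taken. The pseudocode writes $low\gets s_{l_1}$ and $high\gets s_{l_j}$, which I read as meaning the indices $1$ and $m$.

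\textbf{Conclusion.} The total cost is $\mathcal{O}(m\log m)+\mathcal{O}(\log m)=\mathcal{O}(m\log m)$. The main subtlety is stating clearly that sorting is counted as part of the algorithm's input preparation. If the list were assumed to be already sorted, the bound would tighten to $\mathcal{O}(\log m)$, which is still within $\mathcal{O}(m\log m)$.
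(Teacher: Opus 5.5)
Your proof is correct, and its core is the same as the paper's. Each loop iteration does constant work and the search interval is halved every time, so the loop costs $\mathcal{O}(\log m)$. Your recurrence $w_{t+1}\le\lfloor w_t/2\rfloor$ makes precise the paper's one-line ``this is essentially a binary search'' argument.

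Where you differ is in accounting for the factor $m$ in the statement. The paper takes the scores as already sorted, which is the stated input of Algorithm~\ref{alg:binary_equidistant}. It concludes $\mathcal{O}(\log m)$ evaluations and never explains the extra factor $m$, so the claimed $\mathcal{O}(m\log m)$ holds only as a loose upper bound. You instead charge the comparison sort of the $m$ scores to the procedure, which gives a concrete source for the $m\log m$ term. Your closing remark also recovers the paper's sharper $\mathcal{O}(\log m)$ for pre-sorted input, so your argument covers both readings of the lemma.

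The paper's version shows that the search itself is negligible. Yours shows that the stated bound is really a one-time preprocessing cost. That also implies $k$ runs with different final layouts need only $\mathcal{O}(m\log m + k\log m)$, rather than the $\mathcal{O}(k\cdot m\log m)$ the paper states afterwards.
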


\begin{proof}
Let $\mathcal{L}=\{l_1, l_2, \dots, l_m\}$ denote the isomorphic layouts, and $\{s_{l_1}, s_{l_2}, \dots, s_{l_m}\}$ be their associated scores sorted in ascending order. Consider the first and last layouts $l_1, l_j \in \mathcal{L}$ with $j=m$, which define an interval in the sorted list of layouts. The algorithm splits the search space in half in every iteration, and the search is continued in any of half depending on the noise profile score of the current midpoint layout $l_i$ -- which is essentially a binary search. Hence, the algorithm requires at most \(\mathcal{O}(\log m)\) evaluations to determine $i$ such that $\Delta$ is minimum, where \(m\) denotes the number of layouts.

\end{proof}

Note that Algorithm~\ref{alg:binary_equidistant} does not depend on on $l_1$ and $l_j$ being the first and last layouts in $\mathcal{L}$; they may correspond to any layout for which $1 < j$. If Algorithm~\ref{alg:binary_equidistant} is attempted $k$ times with $k$ different final layouts, then its runtime becomes $O(k\cdot m \log m)$, which is still $<<m^2$ when $k<<m$. In the worst case, if the last layout is tried with every layout, then this algorithm reduces to Algorithm~\ref{alg:symmetric_layout_triples}.

Although the binary search approach reduces the overall time complexity, it still requires evaluating, and scoring, the noise profile of all of the candidate isomorphic layouts. This becomes a time intensive and computationally expensive process when the number of layouts is large. In order to deal with this, we iteratively construct a set of accepted layouts using a qubit-overlap criterion. Let $L$ denote the set of layouts for which noise profile has been computed. For each candidate layout $l$ from the list of all isomorphic layouts, we compute its overlap with every layout $l_L \in L$. The noise profile of $l$ is computed only if $l \cap l_L < \eta$, $\forall$ $l_L \in L$ for some predefined $\eta$. The motivation behind this filtering is that two layouts with very high overlap are expected to have very similar noise profiles, and hence can be ignored. For our experiments, we have selected $\eta = 10$. This approximation reduced the number of layouts to be evaluated from $1,615,432$ to $12,768$ for a $40$-qubit EfficientSU2 circuit, thus reducing the overhead of noise profiling by $120\times$.

Since the final layout in Algorithm~\ref{alg:binary_equidistant} is fixed, unlike Algorithm~\ref{alg:symmetric_layout_triples} this \textit{does not
guarantee the optimal solution}. 
Moreover, the truncation of the layouts discussed earlier also dilutes the optimality of the method at the advantage of execution speed. For a 40-qubit EfficientSU2 circuit on a 133 qubit quantum device, Algorithm~\ref{alg:binary_equidistant} required only $2.2 \times 10^{-5}$ seconds to find the three optimal layouts. In Appendix~\ref{sec:appendix}, we show the value of $\Delta$ obtained by the two algorithms with and without truncation. We obtain a sufficiently good result for all these methods, with $\Delta$ ranging between $10^{-6}-10^{-4}$ for most of the cases.

\section{Experimental Results for EfficientSU2 circuit}
\label{sec:result}

In this section, we provide experimental results of FF-ZNE for $30$--$50$ qubits \emph{Cliffordized} EfficientSU2 circuit~\cite{dao2025exploring}. Cliffordized circuits allow us to obtain the noiseless expectation value for our circuits, which are mostly beyond brute force simulation, and hence validate the performance of our method. Fig.~\ref{fig:example_circuits_6q} shows 6-qubit EfficientSU2 circuit, and as an illustrative contrast, also shows a Hamiltonian simulation circuit. All experimental results in this section correspond to $30$--$50$ qubit circuits.

\begin{figure}[t]
    \centering
    \subfloat[A 6-qubit EfficientSU2 ansätze.
    ]{%
    \includegraphics[scale=0.365]{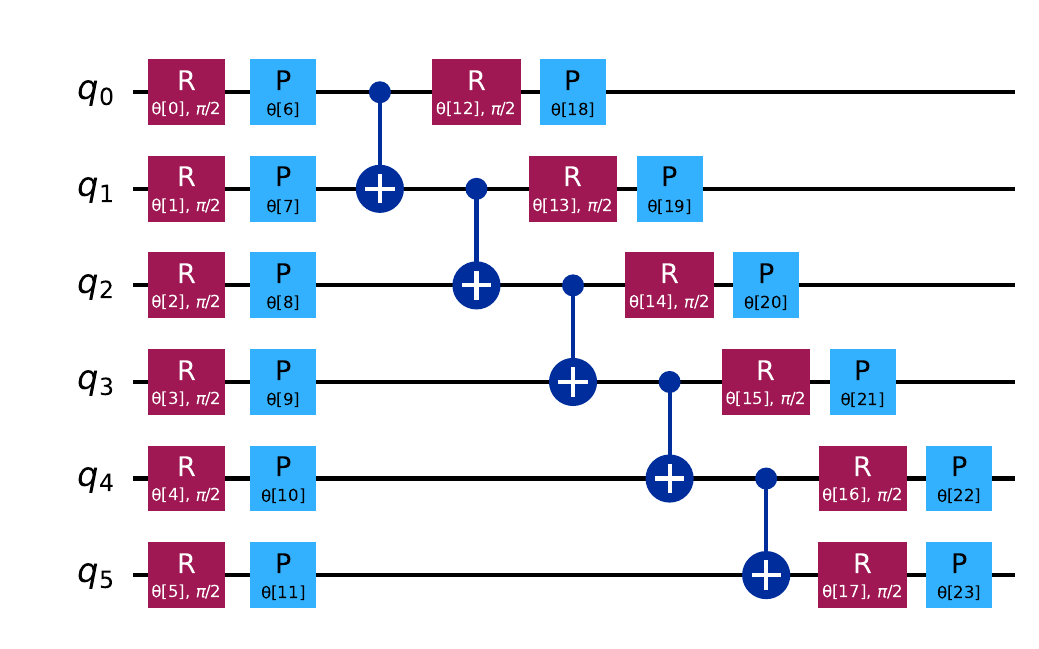}%
    \label{fig:effsu2_6q}%
    }%
    
    \subfloat[A 6-qubit Hamiltonian simulation circuit.]{%
    \includegraphics[width=0.52\textwidth]{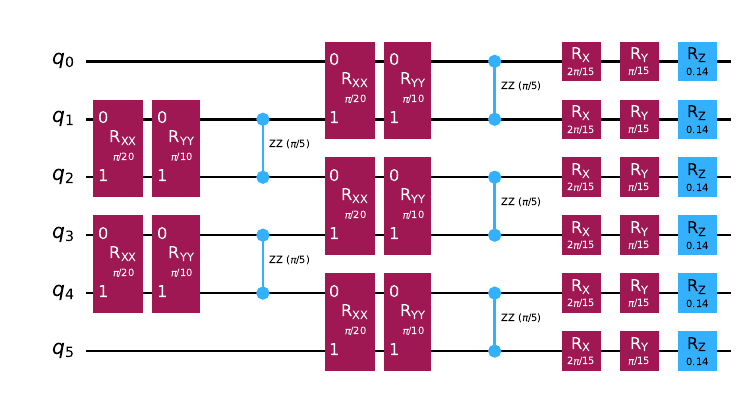}%
    \label{fig:ham_sim_6q}%
    }
    
    \caption{
    Example quantum circuits for $6$ qubits.
    (a) EfficientSU2 ansätze.
    (b) Hamiltonian simulation circuit.
    These are representative examples used for illustration.
    } 
    \label{fig:example_circuits_6q}
\end{figure}

\subsection{Noisy simulation results}
\label{results_noisy_sim}

We validate our method first using noisy simulations. Cliffordization and noisy simulation were performed with the Qiskit \emph{NEAT} tool~\cite{neat_tool}. For FF-ZNE, we require access to multiple isomorphic layouts. Since the Qiskit transpiler returns only the layout with the lowest estimated noise profile, we modify the \emph{VF2PostLayout} pass to return all isomorphic layouts along with their noise scores. We employ the default noise profiling technique of the Qiskit transpiler, \textit{Mapomatic}~\cite{nation2023suppressing}, where layout scores lie between $[0,1]$, with lower values indicating lower estimated noise. Layouts with a score of $1$ are excluded as such scores correspond to non-functional qubits or invalid connectivity. To avoid distortion from anomalously noisy qubits or couplings, we further discard layouts whose scores exceed the mean by more than three standard deviations.

Neither ZNE nor FF-ZNE addresses measurement noise. Moreover, Qiskit currently does not support measurement error mitigation (e.g., TREX) within noisy simulation. Accordingly, measurement errors were disabled in the simulation. The NEAT simulator also disables relaxation noise by default, as it is non-unitary. Under these conditions, the simulator implements an effective Pauli noise model, allowing the use of linear extrapolation as described in Sec.~\ref{lin_extrapolator}.

Using Algorithm~\ref{alg:binary_equidistant}, we then select three layouts $l_1, l_i, l_j$ such that $\Delta$ is minimum. The Cliffordized circuit is executed on each of them, and the expectation value of the observable $O = \frac{1}{n} \sum_{i=0}^{n-1} Z_i$, $n$ being the number of qubits, is calculated. Finally, we fit a linear curve for these three expectation values corresponding to the 3 layout scores, and extrapolate to the zero-noise limit.

\begin{figure}[t]
\begin{center}
\includegraphics[width=0.5\textwidth ]{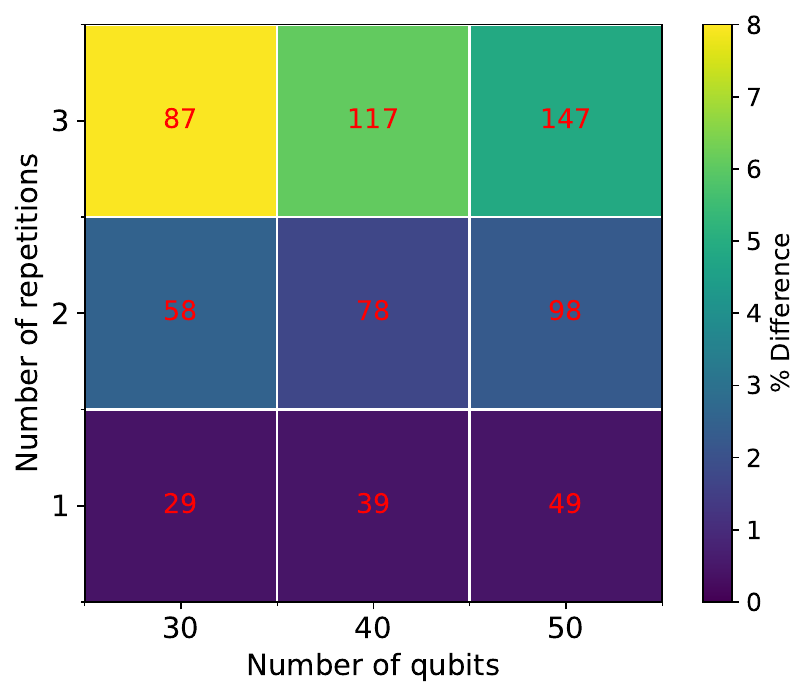}
\caption{A heatmap of the discrepancy between the ideal and the zero-noise expectation values derived using FF-ZNE. We test EfficientSU2 circuits with different numbers of qubits $n$ ranging from 30-50 (x-axis) and repetitions $r$ ranging from 1-3 (y-axis) using the observable $O = \frac{1}{n} \sum_{i=0}^{n-1} Z_i$. The value within each cell is the 2-qubit depth of the corresponding circuit.}
\label{fig:vqe_heatmap}
\end{center}

\end{figure}

Figure~\ref{fig:vqe_heatmap} shows a heatmap comparison between ideal and zero-noise expectation values obtained using our method for EfficientSU2 circuits. We consider system sizes ranging from $n=30$ to $50$ qubits and repetition layers $r=1$ to $3$. For EfficientSU2 ansätze, the two-qubit gate depth scales approximately linearly with both system size and circuit depth, $d_{2q} \approx n r$. In all configurations tested, the average relative deviation of the mitigated expectation values from the ideal values lies between $2.51\%$ and $3.63\%$, as opposed to the unmitigated results which lies between $3.03\%$ and $23.08\%$ away from ideal value. This demonstrates stable and consistent performance across different circuit depths and sizes, along with a clear improvement over the unmitigated results.

\subsection{Hardware execution results}
\label{results_hardware}

Next, we perform experiments on a 133-qubit IBM Quantum Heron processor using a 50-qubit EfficientSU2 circuit. The circuit was cliffordized to enable efficient classical evaluation of the ideal expectation value. Figure~\ref{fig:vqe_50qubit} compares the ideal expectation value with the corresponding zero-noise extrapolated estimates for circuits, with $r=1$ and $2$ repetitions, corresponding to two-qubit gate depths of $49$ and $98$, respectively.
We incorporate Pauli twirling to approximate the noise channel as a stochastic Pauli process, and TREX mitigation to reduce state-preparation and measurement (SPAM) errors. Since this circuit is sparse (see Fig.~\ref{fig:effsu2_6q}), we also incorporate Dynamical Decoupling (DD) using the XY4 pulse sequence. Using FF-ZNE, the mitigated expectation values deviate from the ideal values by an average of approximately $6\%$, as opposed to the unmitigated result which was $16.84\%$ away from ideal value. It shows that the proposed FF-ZNE technique significantly mitigates hardware noise and yields results close to ideal value on real quantum hardware.

\begin{figure}[t]
\centering
\includegraphics[width=0.7\textwidth ]{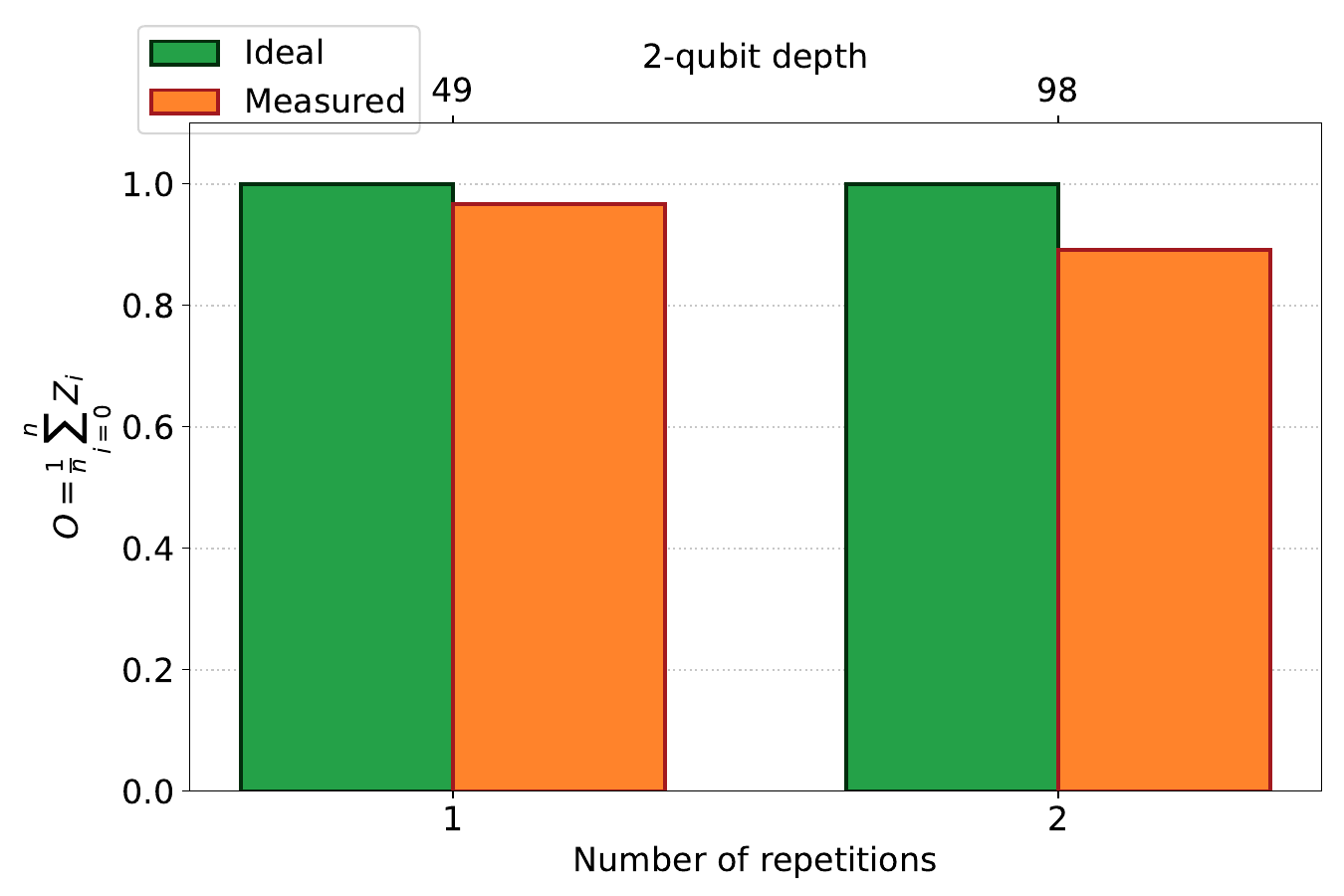}
\caption{Comparative analysis of the expectation values for a 50-qubit EfficientSU2 circuit with number of repetitions of the circuit (lower x-axis), 2-qubit depth of the circuit (upper x-axis) and the expectation value of the observable $O=\frac{1}{n} \sum_{i=0}^{n-1} Z_i$(y-axis)}
\label{fig:vqe_50qubit}

\end{figure}

\begin{table}[t]
\small
\centering
\caption{Expectation values of a 40-qubit EfficientSU2 circuit for observables of different weights.}
\label{tab:expvals_40q}

\begin{tabular}{c|c|c|c|c}
\hline
\multirow{2}{*}{\bf Repetitions} 
& \multirow{2}{*}{\bf Ideal Expval}
& \multicolumn{3}{c}{\bf Expectation Value} \\
\cline{3-5}
&  & \em Weight--1 & \em Weight--2 & \em Weight--3 \\
\hline\hline
1 & 1.0 & 1.0044 & 1.0054 & 1.0025 \\
2 & 1.0 & 1.0170 & 1.0263 & 1.0254 \\
3 & 1.0 & 1.0606 & 1.0734 & 1.0744 \\
\hline
\end{tabular}
\end{table}

\subsection{Results for higher-weight observables in VQE circuit}

A natural question is whether the weight of the measured observable impacts the performance of FF-ZNE. In principle, the observable weight should not directly affect FF-ZNE accuracy, since increasing the weight primarily amplifies SPAM contributions, which are mitigated independently via TREX. To test this expectation explicitly, we evaluate FF-ZNE on a 40-qubit EfficientSU2 circuit using observables of increasing weight. Specifically, we consider translationally averaged Pauli-Z strings of weight 1, 2 and 3, defined respectively as $\frac{1}{n} \sum_{i=0}^{n-1} Z_i$, $\frac{1}{n-1} \sum_{i=0}^{n-2} Z_iZ_{i+1}$ and $\frac{1}{n-2} \sum_{i=0}^{n-3} Z_iZ_{i+1}Z_{i+2}$. The resulting expectation values, reported in Table~\ref{tab:expvals_40q}, show that FF-ZNE yields comparable mitigation performance across all tested observable weights. This indicates that its effectiveness extends beyond single-qubit observables.

\section{QIC Scoring to Overcome Limitations for Dense Circuits}
\label{sec:qic} 

In this section, we examine the limitations of the default \textit{Mapomatic} scoring and demonstrate the use of Quality Indicator Circuit (QIC)~\cite{srivastava2025lightweighttargetedestimationlayout} scoring as an alternative.

\subsection{Limitations of the default profiling method}
\label{subsec:limitations_default_profiling_method}

Table \ref{tab:expvals} reports the zero-noise expectation value of the observable $O = \frac{1}{n} \sum_{i=0}^{n-1} Z_i$ obtained from a 20-qubit Hamiltonian simulation circuit executing on IBM Quantum hardware, with the number of Trotter steps varied between one and three. Across all cases, the mitigated expectation values exhibit a pronounced discrepancy from the corresponding ideal value, ranging from $1.62$--$2.27$ instead of the ideal $0.92$.

\begin{table}[t]
\small
\centering
\caption{Expectation values for the 20-qubit Hamiltonian simulation circuit at different trotter steps where the observable is $O = \frac{1}{n} \sum_{i=0}^{n-1} Z_i$}
\label{tab:expvals}
\begin{tabular}{c|c|c|c}
\hline
\textbf{Ideal Expval} & \textbf{Repetition 1} & \textbf{Repetition 2} & \textbf{Repetition 3} \\
\hline\hline
0.92 & 1.620 & 2.062 & 2.270 \\
\hline
\end{tabular}
\end{table}

This skew can be attributed to the limitations of the Mapomatic layout noise profiling, which follows an exponential decay model. For dense circuits, having a large number of gates, such as those used in Hamiltonian simulation, even relatively shallow circuits are assigned high layout scores. As shown in Table~\ref{tab:mapomatic_scores}, the separation between layout scores for Hamiltonian simulation circuits is substantially smaller than that observed for EfficientSU2 circuits. For example, in repetition 1, layout score differences for VQE circuits are on the order of $0.3$--$0.4$, whereas the corresponding difference for Hamiltonian simulation circuits is approximately $0.1$. This reduced dynamic range hampers reliable extrapolation, leading to degraded error-mitigation performance.

\begin{table}[t]
\small
\centering
\caption{Layout scores using Mapomatic algorithm for 20-qubit VQE and Hamiltonian simulation circuits across different repetitions.}
\label{tab:mapomatic_scores}

\begin{tabular}{c|c|c|c|c}
\hline
\textbf{Circuit} & \textbf{Reps} & \textbf{Layout $s_1$} & \textbf{Layout $s_i$} & \textbf{Layout $s_j$} \\
\hline\hline
\multirow{3}{*}{\bf VQE}
 & 1 & 0.204 & 0.602 & 0.999 \\
 & 2 & 0.343 & 0.671 & 0.999 \\
 & 3 & 0.496 & 0.748 & 0.999 \\
\hline
\multirow{3}{*}{\bf Hamiltonian}
 & 1 & 0.698 & 0.849 & 0.999 \\
 & 2 & 0.764 & 0.882 & 0.999 \\
 & 3 & 0.757 & 0.878 & 0.999 \\
\hline
\end{tabular}
\end{table}

This motivates the requirement of some other noise profiling method which decays slower than Mapomatic.

\subsection{Quality Indicator Circuits for noise profiling}
\label{quality_indicator_circuits_noise_profiling}

A \textit{Quality Indicator Circuit (QIC)}~\cite{srivastava2025lightweighttargetedestimationlayout} is a circuit whose ideal, noise-free outcome is known \textit{a priori} or can be computed efficiently on a classical computer. Importantly, the construction of a QIC is tailored to, and therefore dependent on, the structure of the target circuit of interest. The construction of a QIC satisfies three criteria: (i) the ideal outcome is known, (ii) the circuit preserves the essential structural features of the target circuit, and (iii) it can typically be implemented with a reduced circuit depth relative to the original circuit.

In the original paper~\cite{srivastava2025lightweighttargetedestimationlayout}, QIC is constructed as a network of CNOT gates acting on pairs of qubits, designed to mirror the two-qubit gate connectivity of the target circuit. This CNOT network is enclosed by two layers of Hadamard gates applied to all qubits, resulting in a non-trivial identity circuit with a known ideal outcome. When executed on a given layout, deviations from the ideal outcome provide an estimate of the corresponding noise profile. Prior results show that the decay rate obtained from QIC-based profiling is substantially slower than that observed using Mapomatic, motivating its use for noise profiling within FF-ZNE applied to Hamiltonian simulation circuits.

\section{Experimental Results for Hamiltonian Simulation Circuit}
\label{sec:exp_results_ham_circuit}

We report experimental results for Hamiltonian simulation circuits to assess the effectiveness of FF-ZNE when combined with the QIC method. 
Results for Hamiltonian simulation are presented exclusively on IBM Quantum hardware, as the validity of the proposed approach was established in the preceding section using both noisy simulations and hardware experiments. Accordingly, simulation results are omitted here, since experimental hardware data provide a more stringent test of the method.

Table \ref{tab:mitigated_expval_comparison_20_50} presents the mitigated expectation values for Hamiltonian simulation circuits with 20 and 50 qubits, where the circuits were cliffordized to enable computation of the ideal expectation values. Results obtained using both Mapomatic and QIC-based noise profiling are reported. As in the earlier experiments with EfficientSU2 circuits, we additionally employed Pauli twirling and TREX for the Hamiltonian simulation circuits. Owing to the high gate density of these circuits (see Fig.~\ref{fig:ham_sim_6q}), dynamical decoupling was not applied. 

\begin{table}[t]
\small
\centering
\caption{Comparison of QIC and Mapomatic mitigated expectation values for 20-qubit and 50-qubit Hamiltonian simulation circuits}
\label{tab:mitigated_expval_comparison_20_50}
\begin{tabular}{c|c|c|r|r}
\hline
\textbf{\# qubits} & \textbf{Reps} & \textbf{Ideal Expval} & \multicolumn{2}{c}{\textbf{Mitigated Expval}} \\
\cline{4-5}
 &  &  & \textit{QIC} & \textit{Mapomatic} \\
\hline\hline
\multirow{3}{*}{20} 
 & 1 & \multirow{3}{*}{1.0} & 1.039 & 1.620 \\
 & 2 & & 0.967 & 2.062 \\
 & 3 & & 0.962 & 2.270 \\
\hline
\multirow{3}{*}{50} 
 & 1 & \multirow{3}{*}{1.0} & 1.068 & 1.433 \\
 & 2 & & 1.062 & 29.318 \\
 & 3 & & 1.029 & 423.355 \\
\hline
\end{tabular}
\end{table}

Consistent with prior expectations, QIC-based scoring, characterized by a slower decay rate than Mapomatic, yields mitigated expectation values in closer agreement with the ideal results. Table~\ref{tab:mitigated_expval_comparison_20_50} shows the mitigated expectation value obtained for 20 and 50 qubits Hamiltonian simulation circuit, where the circuit was cliffordized to obtain the ideal expectation value. We show the mitigated expectation value using both Mapomatic and QIC noise profiling technique. As expected, QIC scoring, which has a slower decay rate than Mapomatic, shows close acceptance with the ideal expectation value. We find that the average deviation of the mitigated expectation value using QIC scoring from the ideal value is only $\approx 4.48\%$ as opposed to the unmitigated result which was $9.15\%$ away from ideal value.

\section{Conclusion}
\label{sec:conclusion}

In this paper, we have shown that the performance of the widely used ZNE error-mitigation technique is highly sensitive to the choice of noise-amplification factors and extrapolation models. This limits its practicality and robustness for large or diverse workloads. We propose FF-ZNE as an alternative that eliminates the need for explicit noise-scaling factors by leveraging isomorphic hardware layouts with distinct noise characteristics. This effectively emulates noise amplification by systematically profiling layouts according to their intrinsic noise, without modifying circuit structure. We obtain mitigated expectation values for EfficientSU2 circuits with up to 50 qubits that deviate from ideal values by only $\approx 6\%$.
We further address the limitations of using Mapomatic for scoring of dense circuits, such as Hamiltonian simulations, using QIC-based profiling. Its slower decay in noise characterization provides improved discrimination between layouts. This results in extrapolated expectation values for Hamiltonian simulation circuits with a deviation of only $\approx 4.5\%$ from ideal values on quantum hardware.

Overall, these results establish a scalable and automated alternative to conventional ZNE that lowers the barrier to effective error mitigation. By removing the need for manually selecting noise-amplification factors, the proposed method enables users without deep expertise in noise modeling to apply ZNE reliably, without incurring the substantial overhead of repeated circuit executions associated with trial-and-error parameter tuning. This broadens the practical usability of ZNE and facilitates more efficient execution of complex quantum workloads on near-term quantum devices. However, FF-ZNE, just like standard ZNE, does not provide a rigorous error bound, and retains the inherent bias, which cannot be bounded analytically. Our contribution is not in providing any theoretical bound for ZNE or FF-ZNE, but rather in offering a reliable algorithmic framework that avoids the guesswork currently required for ZNE implementations. For error mitigation techniques with mathematically rigorous error bounds, one should consider alternative strategies such as PEC~\cite{chen2023error}, SLC~\cite{qdc2025_slc_pec} or PNA~\cite{qdc2025_pna}.

\section*{Acknowledgment}
The authors acknowledge the contribution of Dr. Ritajit Majumdar from the Quantum Algorithm and Engineering team of IBM Quantum with detailed discussions in developing the problem statement and setting up the algorithmic and experimental strategies.

\subsection*{Code availability}
The codes used for the experiments reported in this manuscript are available at \url{https://github.com/dream-lab/quantum-FF-ZNE}.

\balance

\bibliographystyle{plain}
\bibliography{paper}

\clearpage
\appendix
\section{Appendix}
\label{sec:appendix}

Table~\ref{tab:delta_raw_clean} reports the values of $\Delta$ obtained for the EfficientSU2 circuit under different layout search strategies of FF-ZNE, highlighting the variation in $\Delta$ across different circuit sizes.

\begin{table*}[h!]
\centering
\caption{Values of $\delta$ for EfficientSU2 circuit under different layout search strategies.}
\label{tab:delta_raw_clean}
\setlength{\tabcolsep}{3pt} 
\renewcommand{\arraystretch}{0.9} 
\begin{tabular}{c|c|c|c|c|C{2cm}}
\hline
\textbf{Qubits} & \textbf{Algorithm} & \textbf{Repetitions} & $\delta$ & \textbf{Time (s)} & \textbf{\# of layouts} \\
\hline\hline
\multirow{12}{*}{10}
& \multirow{3}{*}{Bruteforce}
  & 1 & $1.56\times10^{-4}$ & 0.7 & 2770 \\
& & 2 & $1.42\times10^{-4}$ & 0.6 & 2770 \\
& & 3 & $5.43\times10^{-5}$ & 0.6 & 2770 \\
\cline{2-6}
& \multirow{3}{*}{Binary}
  & 1 & $1.56\times10^{-4}$ & $0.485\times10^{-3}$ & 2770 \\
& & 2 & $2.54\times10^{-3}$ & $0.279\times10^{-3}$ & 2770 \\
& & 3 & $6.24\times10^{-4}$ & $0.348\times10^{-3}$ & 2770 \\
\cline{2-6}
& \multirow{3}{*}{Truncated Binary}
  & 1 & $4.85\times10^{-2}$ & $0.449\times10^{-3}$ & 221 \\
& & 2 & $2.08\times10^{-2}$ & $0.213\times10^{-3}$ & 221 \\
& & 3 & $2.71\times10^{-2}$ & $0.213\times10^{-3}$ & 221 \\
\cline{2-6}
& \multirow{3}{*}{Truncated Bruteforce}
  & 1 & $1.25\times10^{-4}$ & $6.226\times10^{-3}$ & 221 \\
& & 2 & $3.58\times10^{-4}$ & $6.295\times10^{-3}$ & 221 \\
& & 3 & $6.61\times10^{-4}$ & $6.600\times10^{-3}$ & 221 \\
\hline
\multirow{12}{*}{15}
& \multirow{3}{*}{Bruteforce}
  & 1 & $5.12\times10^{-5}$ & 6.0 & 10354 \\
& & 2 & $5.57\times10^{-5}$ & 6.0 & 10354 \\
& & 3 & $6.27\times10^{-5}$ & 6.3 & 10354 \\
\cline{2-6}
& \multirow{3}{*}{Binary}
  & 1 & $5.16\times10^{-4}$ & $ 2.067\times10^{-3}$ & 10354 \\
& & 2 & $2.41\times10^{-4}$ & $ 4.002\times10^{-3}$ & 10354 \\
& & 3 & $5.30\times10^{-4}$ & $ 1.525\times10^{-3}$ & 10354 \\
\cline{2-6}
& \multirow{3}{*}{Truncated Binary}
  & 1 & $7.34\times10^{-4}$ &  $0.279\times10^{-3}$ & 779 \\
& & 2 & $1.90\times10^{-2}$ &  $0.236\times10^{-3}$ & 779 \\
& & 3 & $8.18\times10^{-4}$ &  $0.230\times10^{-3}$ & 779 \\
\cline{2-6}
& \multirow{3}{*}{Truncated Bruteforce}
  & 1 & $3.97\times10^{-4}$ &  0.04239 & 779 \\
& & 2 & $6.25\times10^{-6}$ &  0.04401 & 779 \\
& & 3 & $8.18\times10^{-4}$ &  0.04340 & 779 \\
\hline
\multirow{12}{*}{20}
& \multirow{3}{*}{Bruteforce}
  & 1 & $1.90\times10^{-5}$ & 39.8 & 33000 \\
& & 2 & $4.23\times10^{-6}$ & 39.6 & 33000 \\
& & 3 & $5.12\times10^{-6}$ & 39.3 & 33000 \\
\cline{2-6}
& \multirow{3}{*}{Binary}
  & 1 & $2.57\times10^{-4}$ & $1.161\times10^{-3}$ & 33000 \\
& & 2 & $1.32\times10^{-4}$ & $1.681\times10^{-3}$ & 33000 \\
& & 3 & $4.72\times10^{-5}$ & $2.789\times10^{-3}$ & 33000 \\
\cline{2-6}
& \multirow{3}{*}{Truncated Binary}
  & 1 & $5.37\times10^{-4}$ & $0.486\times10^{-3}$ & 2551 \\
& & 2 & $7.54\times10^{-4}$ & $0.494\times10^{-3}$ & 2551 \\
& & 3 & $3.17\times10^{-4}$ & $0.417\times10^{-3}$ & 2551 \\
\cline{2-6}
& \multirow{3}{*}{Truncated Bruteforce}
  & 1 & $5.63\times10^{-5}$ & 0.48946 & 2551 \\
& & 2 & $3.68\times10^{-5}$ & 0.29738 & 2551 \\
& & 3 & $2.08\times10^{-5}$ & 0.36019 & 2551 \\
\hline
\end{tabular}
\end{table*}

\end{document}